\newtheorem{definition}{Definition}
\newtheorem{theorem}{Theorem}
\newtheorem{lemma}{Lemma}
\newtheorem{assumption}{Assumption}
\newlength{\leftstackrelawd}
\newlength{\leftstackrelbwd}
\def\leftstackrel#1#2{\settowidth{\leftstackrelawd}%
{${{}^{#1}}$}\settowidth{\leftstackrelbwd}{$#2$}%
\addtolength{\leftstackrelawd}{-\leftstackrelbwd}%
\leavevmode\ifthenelse{\lengthtest{\leftstackrelawd>0pt}}%
{\kern-.5\leftstackrelawd}{}\mathrel{\mathop{#2}\limits^{#1}}}
\title{\LARGE \bf
Delay-adaptive Control of Nonlinear Systems\\ with Approximate Neural Operator Predictors 
}
\author{Luke Bhan, Miroslav Krstic, Yuanyuan Shi
\thanks{All three authors are with the University of California, San Diego.}
\thanks{\{lbhan, mkrstic, yyshi\}@ucsd.edu.}
\thanks{Code and datasets available at \url{https://github.com/lukebhan/Neural-Operator-Delay-Adaptive-Predictor-Feedback}.}}
\begin{document}

\maketitle
\thispagestyle{empty}
\pagestyle{empty}

\begin{abstract}
In this work, we propose a rigorous method for implementing predictor feedback controllers in nonlinear systems with unknown and arbitrarily long actuator delays. To address the analytically intractable nature of the predictor, we approximate it using a learned neural operator mapping. This mapping is trained once, offline, and then deployed online, leveraging the fast inference capabilities of neural networks. We provide a theoretical stability analysis based on the universal approximation theorem of neural operators and the transport partial differential equation (PDE) representation of the delay. We then prove, via a Lyapunov-Krasovskii functional, semi-global practical convergence of the dynamical system dependent on the approximation error of the predictor and delay bounds. Finally, we validate our theoretical results using a biological activator/repressor system, demonstrating speedups of $15$ times compared to traditional numerical methods.
\end{abstract}

\section{Introduction} 
In this work, we extend the concept of neural operator approximate predictors to delay systems with constant but \emph{unknown} delays. The presence of unknown delays presents a significant challenge, as the approximation of the predictor introduces an additive error requiring analysis akin to robust adaptive control. Furthermore, unlike robust adaptive control, the adaptive parameter estimated is the actuator delay which directly impacts system stability. 
Despite these challenges, we achieve a result similar to \cite{bhan2024neuraloperatorspredictorfeedback}, ensuring the semi-global practical stability of the feedback system, dependent on the approximation error and an additional error term depending on the delay bounds. This represents the first result for implementing any type of approximate predictor when the delay is constant but unknown. 

\subsection{Predictor feedback designs and implementations} \label{subsec:introduction-predictor-feedback}
Predictor feedback approaches for compensating actuator delays in dynamical systems have been been studied for well-over 50 years \cite{smith1957closer, henson1994time, https://doi.org/10.1002/aic.690350914,  935057, 1272267,  1272269, doi:10.1137/040616383}. Furthermore, predictor feedback compensation has benefited a variety of applications including traffic \cite{9987680, 6248674}, aerospace vehicles \cite{8930010, 9468382}, and robotics \cite{9727201, BEKIARISLIBERIS20131576, 7458842}. Despite this, in nonlinear systems, predictor feedback controllers still suffer from implementation limitations as the predictor itself is an implicit ordinary differential equation \cite{nikoBook}. To alleviate this issue, the book \cite{iassonBook} detailed a variety of numerical schemes combining finite differencing with successive approximations for approximating predictors for both linear and nonlinear systems with corresponding stability guarantees. However, \cite{iassonBook} does not explore the delay-adaptive case and thus no stability results exist for approximating the predictor when the delay is unknown. Furthermore, these schemes suffer from significant computational cost due to the requirement of a small discretization size. Therefore, \cite{bhan2024neuraloperatorspredictorfeedback} introduced operator learning to approximate the predictor mapping achieving $1000$x speedups compared to numerical solvers. However, \cite{bhan2024neuraloperatorspredictorfeedback} was for the simplest case of a nonlinear system - a constant known delay. Thus, we extend \cite{bhan2024neuraloperatorspredictorfeedback} to unknown delays. 

\subsection{Operator learning in control} \label{subsec:operator-learning}
We briefly discuss the advances of operator learning as a tool for real-time implementation of controllers. Neural operators were first introduced in \cite{392253} and then more recently popularized for learning PDE solutions in \cite{li2021fourier} and \cite{Lu2021}. The goal is to approximate any infinite dimensional operator between function spaces by finite dimensional neural networks. In control theory, operator learning was first introduced in \cite{bhan2023neural, KRSTIC2024111649} for approximating the gain kernel in linear PDEs, and has now expanded to output-feedback PDEs \cite{XIAO2024106620}, delayed PDEs systems \cite{10872816, QI2024105714}, gain-scheduling \cite{10918744} and even adaptive control \cite{lamarque2024adaptiveneuraloperatorbacksteppingcontrol,bhan2024adaptivecontrolreactiondiffusionpdes}. Notably, \cite{pmlr-v242-zhang24c} presented the first application of operator learning in real-world traffic scenarios, marking a significant milestone for practical deployment. We further review neural operators in Section \ref{sec:bg-neural-operators}.

\subsection{Notation}
For functions, $u: [0, D] \times \mathbb{R} \to \mathbb{R}$, we use $u_x(x, t) = \frac{\partial u}{\partial x}(x, t)$ and $u_t(x, t) = \frac{\partial u}{\partial t}(x, t)$ to denote derivatives. We use $C^1([t-D, t];\mathbb{R}^m)$ to denote the set of functions with continuous first derivatives mapping the interval $[t-D, t]$ to $\mathbb{R}^m$. For a $n$-vector, we use $|\cdot|$ for the Euclidean norm. For functions, we define the spatial $L^p$ norms as $\|u(t)\|_{L^p[0, D]} = (\int_0^D |u(x, t)|^p dx)^{\frac{1}{p}}$ for $p\in [1, \infty)$. 
\section{Technical background} \label{sec:background}
\subsection{Delay-adaptive control} \label{sec:bg-delay-adpative}
We study the nonlinear plant
\begin{align}
    \dot{X}(t) &= f(X(t), U(t-D))\,, \label{eq:main-problem}
\end{align}
where $X \in \mathbb{R}^n$, $U \in \mathbb{R}$, $f \in C^2(\mathbb{R}^n \times \mathbb{R}; \mathbb{R}^n)$ such that $f(0, 0) = 0$, and $D$ is an unknown delay within the interval $[\underline{D}, \overline{D}]$ where $\underline{D} > 0$. As standard in the predictor feedback literature \cite{delphine}, we introduce the following assumptions:
\begin{assumption} \label{assumption:strongly-forward-complete}
    The plant $\dot{X} = f(X, \Omega)$ with $\Omega$ scalar is strongly forward complete. 
\end{assumption}
\begin{assumption} \label{assumption:gas}
    There exists $\kappa \in C^2(\mathbb{R}^n; \mathbb{R})$ such that the feedback law  $U(t) = \kappa(X(t))$ guarantees that the delay-free plant is globally exponentially stable. 
\end{assumption}

Note that, under Assumption \ref{assumption:gas}, \cite[Theorem 4.14]{khalil} implies that there exists $\lambda > 0$ and a class $C^\infty$ radially unbounded positive definite function $V$ such that for any $X \in \mathbb{R}^n$, we have 
\begin{align}
    \frac{dV}{dX}(X) f(X, \kappa(X)) &\leq -\lambda V(X) \,, \label{eq:linear-lyapunov-1} \\
    |X|^2 &\leq V(X) \leq C_1 |X|^2\,,\label{eq:linear-lyapunov-2} \\ 
    \left|\frac{dV}{dX}(X) \right| &\leq C_2 |X| \label{eq:linear-lyapunov-3} \,, 
\end{align}
where $C_1, C_2 > 0$.

\begin{assumption}\label{assumption:lipschitz-dynamics}
Let $f(X, U)$ be the plant dynamics as in \eqref{eq:main-problem} and $\mathcal{X} \subset \mathbb{R}^n$ and $\mathcal{U} \subset \mathbb{R}$ be compact domains with bounds $\overline{X}$, $\overline{U}$ respectively. Then, there exists a constant $C_f(\overline{X}, \overline{U}) > 0$ such that $f$ satisfies the Lipschitz condition
\begin{align} \label{eq:lipschitz-dynamics-cond}
    |f(X_1, u_1) - f(X_2, u_2)| \leq C_f(|X_1-X_2| + |u_1-u_2|)\,,
\end{align}
for all $X_1, X_2 \in \mathcal{X}$ and $u_1, u_2 \in \mathcal{U}$. 
\end{assumption}

\begin{assumption} \label{assumption:growth-condition}
    The function $f$, the Jacobian matrix $\partial f/\partial X$, the control law $\kappa$ and its derivative satisfy the following growth conditions
    \begin{alignat}{2}
        |f(X, U)| &\leq M_1(|X|+|U|)\,,& \quad \left| \frac{\partial f}{\partial X}(X, U) \right| &\leq M_2 \,, \nonumber  \\
        |\kappa(X)| &\leq M_3|X|\,,& \quad \left| \frac{d\kappa}{dX}(X) \right| &\leq M_4\,,\nonumber 
    \end{alignat}
    for constants $M_1, M_2, M_3, M_4 > 0$.
\end{assumption}

Assumption \ref{assumption:strongly-forward-complete} ensures the plant \eqref{eq:main-problem} does not escape before the input reaches the system at $t=D$. Assumption \ref{assumption:gas} of global exponential stability is needed for the semi-global results in Section \ref{sec:theoretical-analysis}, though it can be relaxed for a local result \cite{delphine}. Assumption \ref{assumption:lipschitz-dynamics} is the only new assumption compared to \cite{delphine} and is required for approximating the predictor. Assumption \ref{assumption:growth-condition} is needed to show global asymptotic stability without any predictor approximation. Assumption \ref{assumption:growth-condition} makes Assumption \ref{assumption:strongly-forward-complete} redundant and is implied by Lipschitzness, but is included for clarity.

As standard in predictor feedback designs, we reformulate the plant \eqref{eq:main-problem} into the following ODE-PDE cascade where the delay is absorbed into the transport PDE with velocity $D$:
\begin{subequations}
\begin{align}
    \dot{X}(t) &= f(X(t), u(0, t))\,, \label{eq:plant-ode-pde-1} \\
    Du_t(x, t) &= u_x(x, t) \,, \label{eq:plant-ode-pde-2}\\ 
    u(1, t) &= U(t)\,.  \label{eq:plant-ode-pde-3}
\end{align}
\end{subequations}

The representation of the delay through a transport PDE was first introduced in \cite{miroslavDelay} where the analytical solution of the transport PDE recovers the original plant. Namely, we have that
\begin{align} \label{eq:transport-pde-sol}
    u(x, t) = U(t+D(x-1))\,, \quad  x \in [0, 1]\,, 
\end{align}
and thus when $x=0$ in the PDE representation, we recover $U(t-D)$ as in \eqref{eq:main-problem}. To compensate for the delay, we estimate the state $D$ timesteps in the future with the predictor given for all $x\in [0, 1]$ 
\begin{align} \label{eq:exact-predictor-no-adaptive}
    p(x, t) = X(t+Dx) = X(t) + D \int_0^x f(p(y, t), u(y, t))dy\,, 
\end{align}
which is itself, an implicit ODE whose analytical form is unknown for most nonlinear plant dynamics $f$. 

Then, under the assumption that the delay is known, the stabilizing control law is given by $U(t) = \kappa(p(1, t))$. When the delay is unknown, as in the problem setting of this work, the control law is chosen by 
\begin{align} \label{eq:exact-controller-adaptive}
    U(t) = \kappa(P(1, t))\,,
\end{align}
where the predictor is given by
\begin{align} \label{eq:exact-predictor-adaptive}
    P(x, t) = X(t) + \breve{D}(t) \int_0^x f(P(y, t), u(y, t)) dy\,. 
\end{align}
with the delay estimate $\breve{D}(t)$. 
Following \cite{delphine}, $\breve{D}(t)$ is updated via 
\begin{align} \label{eq:update-law-projector}
    \dot{\breve{D}}(t) &= \gamma \text{Proj}_{[\underline{D}, \overline{D}]} \left\{ \breve{D}(t), \phi(t) \right\}\,, \\ \label{eq:update-law}
    \phi(t) &= - \frac{\int_0^1 (1+x) q_1(x, t)w(x, t)dx}{1 + V(x) + b\int_0^1(1+x)w(x, t)^2dx}\,, 
\end{align}
where $b$ is a user-specified parameter, $V$ is the positive definite Lyapunov function in \eqref{eq:linear-lyapunov-1}, \eqref{eq:linear-lyapunov-2}, \eqref{eq:linear-lyapunov-3}, $w$ is given by the backstepping transformation
\begin{align} \label{eq:w-def-bcks-transform}
    w(x, t) = u(x, t) - \kappa(P(x, t))\,, 
\end{align}
and the scalar function $q_1$ is defined as 
\begin{align}
    q_1(x, t) = \frac{d \kappa}{d P}(P(x, t)) \Phi(x, 0, t) f(P(0, t), u(0, t))\,, 
\end{align}
where $\Phi$ is the transition matrix associated with the space-varying time-parameterized equation $(dr/dx)(x) = \breve{D}(t) (\partial f/\partial P) (P(x, t), u(x, t))r(x)$.

Then, under the control feedback law \eqref{eq:exact-controller-adaptive}, \cite[Theorem 1]{delphine} states that the plant is globally asymptotically stable. However, as mentioned, \eqref{eq:exact-predictor-adaptive} is not analytically known and therefore needs to be approximated for any real-world implementation. Thus, we introduce an approximation for the predictor $P$ and show that, under the universal approximation of neural operators, semi-global practical asymptotic stability of the closed-loop system is achieved. 

\subsection{Neural operators} \label{sec:bg-neural-operators}
For the readers reference, we briefly review the related background and theoretical results used in the analysis in Section \ref{sec:theoretical-analysis}. Neural operators are finite dimensional approximations of nonlinear operators (e.g. the solution operator to any nonlinear ODE) which map across function spaces. Thus, a neural operator takes a representation of the input function $c(x)$ and its evaluation point $x$ as inputs. Additionally, as input, it takes a value $y$, which specifies the point to evaluate the target function after applying the operator to $c$. The neural operator then provides an approximation of the value of the operator applied to $c(x)$ at the point $y$ in the target function space via finite neural networks.

We now describe neural operators from a rigorous perspective. Namely, let $\Omega_u \subset \mathbb{R}^{d_{u_1}}$, $\Omega_v \subset \mathbb{R}^{d_{v_1}}$ be bounded domains and let  $\mathcal{F}_c \subset C^0(\Omega_u; \mathbb{R}^c)$, $\mathcal{F}_v \subset C^0(\Omega_v; \mathbb{R}^v)$ be continuous function spaces. Then, we define a neural operator approximation for the nonlinear operator $\Psi$ as any  function satisfying the following form:
\begin{definition} \label{definition:neural-operator} \cite[Section 1.2]{lanthaler2024nonlocalitynonlinearityimpliesuniversality}
    Given the parameter for channel dimension $d_c$, we call any $\hat{\Psi}$ a neural operator given it satisfies the compositional form $\hat{\Psi} = \mathcal{Q} \circ \mathcal{L}_L \circ \cdots \circ \mathcal{L}_1 \circ \mathcal{R}$ where  $\mathcal{R}$ is a lifting layer, $\mathcal{L}_l, l=1,..., L$ are the hidden layers, and $\mathcal{Q}$ is a projection layer. That is, 
    $\mathcal{R}$ is given by 
    \begin{equation}
    \mathcal{R} : \mathcal{F}_c(\Omega_u; \mathbb{R}^c) \rightarrow \mathcal{F}_s(\Omega_s; \mathbb{R}^{d_c}), \quad c(x) \mapsto R(c(x), x)\,, 
\end{equation} where $\Omega_s \subset \mathbb{R}^{d_{s_1}}$, $\mathcal{F}_s(\Omega_s; \mathbb{R}^{d_c})$ is a Banach space for the hidden layers and $R: \mathbb{R}^c \times \Omega_u \rightarrow \mathbb{R}^{d_c}$ is a learnable neural network acting between finite-dimensional Euclidean spaces. For $l=1, ..., L$, each hidden layer is given by 
\begin{equation} \label{eq:generalNeuralOperator}
    (\mathcal{L}_l v)(x) := s \left( W_l v(x) + b_l + (\mathcal{K}_lv)(x)\right)\,, 
\end{equation}
where weights $W_l \in \mathbb{R}^{d_c \times d_c}$ and biases $b_l \in \mathbb{R}^{d_c}$ are learnable parameters, $s: \mathbb{R} \rightarrow \mathbb{R}$ is a smooth, infinitely differentiable activation function that acts component wise on inputs and $\mathcal{K}_l$ is the nonlocal operator given by 
\begin{equation} \label{eq:generalKernel}
    (\mathcal{K}_lv)(x) = \int_\mathcal{X} K_l(x, y) v(y) dy\,,
\end{equation}
where $K_l(x, y)$ is a kernel function containing learnable parameters. Lastly, the projection layer $\mathcal{Q}$ is given by 
\begin{equation}
    \mathcal{Q} : \mathcal{F}_s(\Omega_s; \mathbb{R}^{d_c}) \rightarrow \mathcal{F}_v(\Omega_v; \mathbb{R}^v), \quad s(x) \mapsto Q(s(x), y)\,, 
\end{equation}
where $Q$ is a finite dimensional neural network from $\mathbb{R}^{d_c} \times \Omega_v \rightarrow \mathbb{R}^v$. 
\end{definition}

This abstract formulation of neural operators covers a wide range of architectures including the Fourier Neural Operator (FNO) \cite{li2021fourier} as well as DeepONet \cite{Lu2021} where the differentiator between these approaches lies in the implementation of the kernel function $K_l$. However, in \cite{lanthaler2024nonlocalitynonlinearityimpliesuniversality}, the authors showed that a single hidden layer neural operator with a kernel given by the averaging kernel $K_l(x, y) =1/|\mathcal{X}|$ where $|\mathcal{X}|$ is the diameter of the domain is sufficient for universal approximation in the following sense
\begin{theorem}
    \label{thm:neural-operator-uat}
    \cite[Theorem 2.1]{lanthaler2024nonlocalitynonlinearityimpliesuniversality} Let $\Omega_u \subset \mathbb{R}^{d_{u_1}}$ and $\Omega_v \subset \mathbb{R}^{d_{v_1}}$ be two bounded domains with Lipschitz boundary. Let $\Psi: C^0(\overline{\Omega_u};\mathbb{R}^{d_{u_1}}) \rightarrow C^0(\overline{\Omega_v}; \mathbb{R}^{d_{v_1}}) $ be a continuous operator and fix a compact set $K \subset C^0(\overline{\Omega_u};\mathbb{R}^{d_{u_1}})$. Then for any $\epsilon > 0$, there exists a channel dimension $d_c > 0$ such that a single hidden layer neural operator $\hat{\Psi}: K \rightarrow C^0(\overline{\Omega_v}; \mathbb{R}^{d_{v_1}})$ with kernel function $K_l(x, y) =1/|\mathcal{X}|$ satisfies  
\begin{equation}
    \sup_{u \in K} |\Psi(u)(y) - \hat{\Psi}(u)(y)|\leq \epsilon\,,
\end{equation}
for all values $y \in \Omega_v$.
\end{theorem}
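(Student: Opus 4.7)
My plan is to reduce infinite-dimensional operator approximation to finite-dimensional neural network approximation, using the averaging kernel together with a suitably chosen pointwise lifting layer to encode finitely many moments of $u$, and then appealing to the classical Cybenko/Hornik universal approximation theorem through the projection layer. First, I would exploit the compactness of $K$ to pass from functions to finite moments: by Arzela-Ascoli, $K$ is equicontinuous and uniformly bounded, so given any $\eta > 0$ I can pick a finite collection of test functions $\phi_1, \ldots, \phi_N \in C^0(\Omega_u)$, for concreteness a smooth partition of unity subordinate to a fine enough mesh, such that the moment vector $c(u) := (\int_{\Omega_u} u(y)\phi_1(y)\,dy, \ldots, \int_{\Omega_u} u(y)\phi_N(y)\,dy)$ determines $u$ up to $\eta$ uniformly on $K$. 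Equicontinuity plus the Lipschitz boundary assumption do the heavy lifting here by supplying a uniform modulus of continuity.

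Next, I would design the lifting layer $\mathcal{R}$ so that its $i$-th output channel approximates $u(x)\phi_i(x)|\Omega_u|$ for $i = 1, \ldots, N$. Since $R$ is a finite-dimensional MLP acting on the pair $(u(x), x)$, the classical Cybenko/Hornik theorem supplies such an approximation on the compact joint range to arbitrary accuracy. The averaging kernel $K_l(x,y) = 1/|\Omega_u|$ then produces, channelwise, $(\mathcal{K}_1 v)(x) \approx c_i(u)$, constant in $x$. This is the crucial conversion of local features into global coefficients, the step where nonlocality is essential. The activation $s$ acts componentwise on $W_1 v(x) + b_1 + (\mathcal{K}_1 v)(x)$, which is effectively a standard MLP evaluation on the augmented vector $(v(x), c(u))$. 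Finally, $\mathcal{Q}$ takes the encoded coefficients together with the query point $y$, and by a second finite-dimensional UAT approximates the continuous reduced map $(c(u), y) \mapsto \Psi(u)(y)$ uniformly; continuity of this reduced map follows from continuity of $\Psi$ together with the encoding bound obtained in the first step.

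The main obstacle, and the substantive content of \cite{lanthaler2024nonlocalitynonlinearityimpliesuniversality}, is composing the three error sources in the correct dependent order uniformly over both $u \in K$ and $y \in \Omega_v$: an $\eta$-error between the moment vector and the function, the neural-network approximation errors in $\mathcal{R}$ and $\mathcal{Q}$, and the modulus of continuity of $\Psi$ on a neighborhood of $K$. Choosing $N$ large to shrink the encoding error enlarges the compact set on which $\mathcal{Q}$ must approximate, and widening $\mathcal{R}$ to reduce the encoding noise also enlarges that set, so I would have to fix the error tolerances in the order: modulus of continuity of $\Psi$ on $K$ first, then $N$, then the MLP widths for $\mathcal{R}$ and $\mathcal{Q}$. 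This bookkeeping, rather than any individual step, is where I expect the proof to be technically delicate.
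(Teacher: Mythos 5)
The paper itself does not prove Theorem~\ref{thm:neural-operator-uat}; it is quoted verbatim from \cite{lanthaler2024nonlocalitynonlinearityimpliesuniversality}, so there is no in-paper argument to compare against. At a high level your sketch is a faithful reconstruction of the encoder/decoder strategy used there: compactness of $K$ lets you replace a function by a finite vector of moments, the pointwise lifting followed by the averaging kernel implements that moment map inside the architecture, and a finite-dimensional universal approximation theorem then handles the induced map on coefficient space times the query domain.

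Two steps need repair before the sketch is watertight. First, the hidden-layer output $s\bigl(W_1 v(x) + b_1 + (\mathcal{K}_1 v)(x)\bigr)$ still depends on $x$ through $W_1 v(x)$, whereas $\hat{\Psi}(u)$ must be a function of $y\in\Omega_v$ alone; since $u$ lives on $\Omega_u$ and $y$ ranges over a possibly different domain $\Omega_v$, the clean fix is to take $W_1 = 0$, so that after averaging the hidden function is constant in $x$ and $Q(\cdot, y)$ sees only the moment vector $c(u)$ and the query point $y$. You should say this explicitly, as it is precisely the step that makes the nonlocal term essential and the local affine term dispensable. Second, the claim that the moment vector ``determines $u$ up to $\eta$'' is not what you need and is not literally true; what you need is a reconstruction estimate: for a fine enough partition of unity there exist reconstruction functions $\psi_i$ with $\sup_{u\in K}\|u - \sum_i c_i(u)\psi_i\|_{C^0}\le\eta$, which together with the continuity of $\Psi$ (uniform over the compact set $K$ and a small $C^0$-neighborhood of it) shows that $(c,y)\mapsto\Psi(\sum_i c_i\psi_i)(y)$ is a continuous map whose value at $(c(u),y)$ is within a controlled error of $\Psi(u)(y)$; the finite-dimensional universal approximation theorem is applied to this composed map. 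Relatedly, the uniform modulus of continuity of elements of $K$ comes from compactness via Arzel\`a--Ascoli, not from the Lipschitz boundary; the boundary regularity is there so that $C^0(\overline{\Omega_u})$, the partition of unity, and the integral kernel are well-behaved, not to supply equicontinuity.
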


The key corollary to Theorem 1 is that, many of the architectures aforementioned (e.g. FNO, Laplace neural operator, DeepONet) contain kernel functions $K_l$ that recover the averaging kernel and thus are universal under the setting of Theorem \ref{thm:neural-operator-uat}.  
Henceforth, for the remainder of this study, we assume that any reference to a neural operator is an architecture of the form given in Definition \ref{definition:neural-operator} with a viable kernel function such that Theorem \ref{thm:neural-operator-uat} holds. We are now ready to introduce the predictor operator to be approximated.

\section{Neural operator approximate predictors} \label{sec:theoretical-analysis}
\begin{definition} \label{defintion:predictor-operator}
Let $X \in \mathbb{R}^n$, $U \in C^2([0, 1]; \mathbb{R})$, $\varphi \in \mathbb{R}^+$. Then, we define the \textbf{predictor operator} as the mapping $\mathcal{P}: (X, U, \varphi) \rightarrow P$ where $P(s) = \mathcal{P}(X, U, \varphi)(s)$ satisfies for all $s \in [0, 1]$, 
    \begin{align} \label{eq:predictor-operator}
        P(s) - X - \varphi \int_0^s f(P(s), U(s)) ds = 0\,. 
    \end{align}
\end{definition}

Notice that, by definition the predictor operator yields the solution to \eqref{eq:exact-predictor-adaptive} where $\varphi$, as an input to the operator, is the estimated delay. Furthermore, in order to approximate this mapping with a neural operator such that Theorem \ref{thm:neural-operator-uat} holds, we require that the predictor operator is continuous. Thus, we prove the following Lemma:
\begin{lemma}
    \label{lemma:continuity-of-predictor}
    Let Assumption \ref{assumption:lipschitz-dynamics} hold. Then, for any $X_1, X_2 \in \mathcal{X}$, $U_1, U_2 \in C^2([0, 1]; \mathcal{U})$, and $\varphi_1, \varphi_2 \in (\underline{D}, \overline{D})$ the predictor operator $\mathcal{P}$ given in Definition \ref{defintion:predictor-operator} satisfies 
    \begin{align} \label{eq:predictor-lipschitz-bound}
        \|\mathcal{P}(&X_1, U_1, \varphi_1) - \mathcal{P}(X_2, U_2, \varphi_2) \|_{L^\infty[0, 1]} \nonumber \\   &\leq C_{\mathcal{P}} \left(|X_1-X_2| + \|U_1 - U_2\|_{L^\infty[0, 1]} + |\varphi_1 - \varphi_2| \right)\,,
    \end{align}
    with Lipschitz constant
    \begin{align}
        C_{\mathcal{P}} &= e^{\overline{D}C_f} \max\left\{1, \Xi, \overline{D}C_f\right\}\,, \\
        \Xi &= C_f\left[\overline{U} + e^{\overline{D}C_f}(\overline{X} + C_f\overline{D}\overline{U})\right]\,. 
    \end{align}
\end{lemma}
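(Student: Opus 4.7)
The plan is a two-stage Gronwall argument: first to establish an a priori sup-norm bound on any trajectory returned by the predictor operator, then to apply Gronwall a second time to the difference of two such trajectories.

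For the first stage I would observe that, since $f(0,0)=0$, Assumption~\ref{assumption:lipschitz-dynamics} implies the growth estimate $|f(P,U)| \le C_f(|P|+|U|)$ on $\mathcal{X}\times\mathcal{U}$. Feeding this into the defining relation \eqref{eq:predictor-operator} for $P_i := \mathcal{P}(X_i,U_i,\varphi_i)$ gives
\begin{align*}
|P_i(s)| \;\le\; \overline{X} + \overline{D}\,C_f\,\overline{U} + \overline{D}\,C_f \int_0^s |P_i(y)|\,dy,
\end{align*}
and a standard scalar Gronwall inequality yields the uniform bound $\|P_i\|_{L^\infty[0,1]} \le e^{\overline{D}C_f}\bigl(\overline{X} + \overline{D}C_f\overline{U}\bigr)$. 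A one-line Lipschitz step then produces $|f(P_i(y),U_i(y))| \le \Xi$, the constant defined in the statement.

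For the second stage I would subtract the two implicit equations and decompose
\begin{align*}
P_1(s) - P_2(s) = (X_1 - X_2) + (\varphi_1 - \varphi_2)\!\int_0^s\! f(P_1,U_1)\,dy + \varphi_2\!\int_0^s\!\bigl[f(P_1,U_1) - f(P_2,U_2)\bigr]\,dy,
\end{align*}
so that the $\varphi$-perturbation and the joint $(P,U)$-perturbation are cleanly separated. The middle term is controlled by the a priori bound $|f(P_1,U_1)|\le \Xi$, and the final term by Assumption~\ref{assumption:lipschitz-dynamics} together with $\varphi_2\le\overline{D}$. Collecting terms gives
\begin{align*}
|P_1(s)-P_2(s)| \le |X_1{-}X_2| + \Xi|\varphi_1{-}\varphi_2| + \overline{D}C_f\|U_1{-}U_2\|_{L^\infty[0,1]} + \overline{D}C_f\!\int_0^s\!|P_1(y){-}P_2(y)|\,dy,
\end{align*}
to which a second Gronwall application, followed by taking the supremum over $s\in[0,1]$ and factoring out $\max\{1,\Xi,\overline{D}C_f\}$, yields the Lipschitz constant $C_{\mathcal{P}} = e^{\overline{D}C_f}\max\{1,\Xi,\overline{D}C_f\}$ in precisely the claimed form.

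The only real obstacle is bookkeeping: one must ensure that the a priori bound from stage one propagates correctly into the constant $\Xi$ that multiplies $|\varphi_1-\varphi_2|$, so that the final Lipschitz constant depends only on the domain bounds $\overline{X},\overline{U},\overline{D}$ and on $C_f$, and not implicitly on the particular trajectories. Once that nesting is handled, the rest is the standard Lipschitz-plus-Gronwall template for integral equations, extended to keep track of three distinct perturbation sources (initial state $X$, driving function $U$, and delay parameter $\varphi$).
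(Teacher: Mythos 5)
Your proposal is correct and follows essentially the same two-stage Gronwall argument as the paper: first an a priori sup-norm bound on the predictor trajectory (used to control the $\varphi$-perturbation term), then Gronwall applied to the difference of the two implicit integral equations after decomposing it into the $X$, $\varphi$, and joint $(P,U)$ contributions. The decomposition, the role of $\Xi$, and the resulting Lipschitz constant all match the paper's derivation.
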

\begin{proof}
    For all $s \in [0, 1]$, let $\bar{P}_1(s) := \mathcal{P}(X_1, U_1, \varphi_1)$ and likewise  $\bar{P}_2(s) := \mathcal{P}(X_2, U_2, \varphi_2)$. Note that, for $s\in [0,1]$ the predictor is uniformly bounded. First, by definition and Assumption \ref{assumption:lipschitz-dynamics}, we have 
    \begin{align}
        \bar{P}_1(s) &= X_1 + \varphi_1 \int_0^s f(\bar{P}_1(\theta), U_1(\theta)) d \theta \nonumber \\ 
        &\leftstackrel{\eqref{eq:lipschitz-dynamics-cond}}{\leq} X_1 + \varphi_1 \int_0^s C_f(|\bar{P}_1(\theta)| + |U_1(\theta)|) d\theta \nonumber  \\ 
        &\leq X_1 + \varphi_1 C_f \|U_1\|_{L^\infty[0, 1]} + \varphi_1 C_f \int_0^1 |\bar{P}_1(\theta)| d \theta 
    \end{align}
    Applying Gronwall's inequality yields 
    \begin{align} \label{eq:predictor-uniform-bound}
        |\bar{P}_1(s)| \leq e^{\overline{D}C_f}(\overline{X} + C_f \overline{D} \phantom{,} \overline{U})\,. 
    \end{align}
    Then, applying \eqref{eq:predictor-uniform-bound} yields the following calculation
       \begin{alignat}{2}
        \bar{P}_1(s) - \bar{P}_2(s) &=&& X_1 - X_2 + \varphi_1 \int_{-1}^s f(\bar{P}_1(\theta ), U_1(\theta )) d \theta  \nonumber  \\ & &&\nonumber  - \varphi_2 \int_{-1}^s f(\bar{P}_2(\theta ), U_2(\theta )) d\theta \\ \nonumber
        &\leq&& |X_1-X_2| \nonumber \\\nonumber & && + (\varphi_1-\varphi_2) \int_{-1}^s  f(\bar{P}_1(\theta ), U_1(\theta )) d\theta \\\nonumber & && + \varphi_2 \int_{-1}^s\bigg( f(\bar{P}_1(\theta), U_1(\theta )) \\ \nonumber  & &&- f(\bar{P}_2(\theta), U_2(\theta )) \bigg) d \theta   \\ 
        &\leftstackrel{\eqref{eq:lipschitz-dynamics-cond}}{\leq}&& |X_1 - X_2| + (\varphi_1-\varphi_2) C_f \|U_1\|_{L^\infty[0, s]} \nonumber \\ \nonumber & &&  + (\varphi_1-\varphi_2) C_f \int_{-1}^s |P(\theta )| d \theta \\ & && + \varphi_2C_f \int_{-1}^s |\bar{P}_1(\theta) - P_2(\theta)|\nonumber \\ & && \nonumber + |U_1(\theta)-U_2(\theta)| d \theta \\ 
        &\leftstackrel{\eqref{eq:predictor-uniform-bound}}{\leq}&&  |X_1 - X_2| \nonumber \\ & &&  + |\varphi_1-\varphi_2| C_f \bigg[\overline{U}  + e^{C_f\overline{D}}(\overline{X} + C_f \overline{D} \phantom{,} \overline{U})\bigg] \nonumber \\ & && + \overline{D}C_f \|U_1-U_2\|_{L^\infty[-1, s]} \nonumber \\ & && 
        + \overline{D} C_f \int_{-1}^s |\bar{P}_1(\theta) - \bar{P}_2(\theta)|d\theta \,.
    \end{alignat}
    Applying Gronwall's again yields the desired result. 
\end{proof}

Given the continuity in Lemma \ref{lemma:continuity-of-predictor}, we can then apply Theorem \ref{thm:neural-operator-uat} to the predictor operator yielding the existence of an arbitrarily close neural operator approximation:
\begin{theorem} \label{thm:uat-predictor}
    Fix a compact set $K \subset \mathcal{X} \times C^2([0, 1]; \mathcal{U}) \times \mathcal{D}$. Then, for all $\overline{X}$, $\overline{U}$, $\overline{D}$, $\epsilon > 0$, there exists a neural operator approximation $\hat{\mathcal{P}} : K \to C^1([0, 1]; \mathbb{R}^n)$ such that 
    \begin{align}
        \sup_{(X, U, \varphi) \in K} &|\mathcal{P}(X, U, \varphi)(s)  - \hat{\mathcal{P}}(X, U, \varphi)(s) |  < \epsilon \,, 
    \end{align}
    for all $s \in [0, 1]$.
\end{theorem}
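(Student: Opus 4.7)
The plan is to reduce Theorem 2 to the neural operator universal approximation result in Theorem 1 by recasting $\mathcal{P}$ as a continuous operator between standard $C^0$ function spaces on $[0,1]$, invoking Theorem 1 on that reformulation, and composing with the encoding to recover a neural operator with the prescribed input signature.

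First, I would absorb the scalar inputs $X$ and $\varphi$ into constant channels and bundle them with $U$ by defining the encoding map
\begin{equation*}
\iota : \mathcal{X} \times C^2([0,1]; \mathcal{U}) \times \mathcal{D} \to C^0([0,1]; \mathbb{R}^{n+2}), \quad \iota(X, U, \varphi)(s) = (X,\, U(s),\, \varphi).
\end{equation*}
This map is injective (the three components can be read off from any $c(s)$ in the image) and continuous when the middle factor carries its $C^2$ topology, since the identity embedding $C^2([0,1];\mathcal{U}) \hookrightarrow C^0([0,1];\mathcal{U})$ is continuous and the two scalar embeddings are trivial. Consequently $\iota(K) \subset C^0([0,1]; \mathbb{R}^{n+2})$ is compact as the continuous image of the compact set $K$.

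Next I would define the induced operator $\tilde{\mathcal{P}} : \iota(K) \to C^0([0,1]; \mathbb{R}^n)$ by $\tilde{\mathcal{P}}(\iota(X,U,\varphi)) = \mathcal{P}(X,U,\varphi)$ and verify its continuity. If $c_k \to c$ in $C^0([0,1]; \mathbb{R}^{n+2})$ with $c_k = \iota(X_k, U_k, \varphi_k)$, reading off the channels gives $|X_k - X| + \|U_k - U\|_{L^\infty[0,1]} + |\varphi_k - \varphi| \to 0$, and Lemma 1 then yields $\|\mathcal{P}(X_k, U_k, \varphi_k) - \mathcal{P}(X, U, \varphi)\|_{L^\infty[0,1]} \to 0$ through the Lipschitz bound with constant $C_{\mathcal{P}}$. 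Thus $\tilde{\mathcal{P}}$ is a continuous operator from a compact subset of $C^0([0,1]; \mathbb{R}^{n+2})$ into $C^0([0,1]; \mathbb{R}^n)$, and Theorem 1 supplies, for the given $\epsilon > 0$, a neural operator $\hat{\tilde{\mathcal{P}}}$ (with a valid kernel in the sense of Section \ref{sec:bg-neural-operators}) satisfying $\sup_{c \in \iota(K)} |\tilde{\mathcal{P}}(c)(s) - \hat{\tilde{\mathcal{P}}}(c)(s)| < \epsilon$ uniformly in $s \in [0,1]$. Setting $\hat{\mathcal{P}} := \hat{\tilde{\mathcal{P}}} \circ \iota$ then delivers the claimed bound for all $(X,U,\varphi) \in K$.

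The genuinely subtle points I anticipate are twofold. The first is conceptual: Theorem 1 is stated for operators between pure function spaces, while $\mathcal{P}$ has a mixed scalar/function input, and the burden is to justify that the constant-channel encoding $\iota$ legitimately reduces the former to the latter without losing compactness or continuity; this is the step where the $C^2$ topology on the $U$-factor (ensuring compact embedding into $C^0$ via Arzel\`a--Ascoli) actually matters. The second is confirming that $\hat{\mathcal{P}}$ lands in $C^1([0,1]; \mathbb{R}^n)$ rather than merely $C^0$, which is immediate from Definition 1 because the lifting, hidden-layer, and projection maps are built from smooth activations and smooth (here, constant) kernels, so the output depends $C^\infty$-smoothly on the evaluation argument $s$. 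Neither presents a real analytic obstacle; the essential work is already carried by Lemma 1, which is what licenses the application of the universal approximation theorem.
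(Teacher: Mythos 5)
Your proposal is correct and follows the same route the paper takes: Lemma \ref{lemma:continuity-of-predictor} supplies the continuity needed to invoke the universal approximation theorem (Theorem \ref{thm:neural-operator-uat}), which then yields $\hat{\mathcal{P}}$. The paper states this in a single sentence without addressing the mixed scalar/function input signature; your constant-channel encoding $\iota$ is the natural way to make that reduction precise, and the argument goes through. One small clarification on your "subtle point": you do not need the compact embedding $C^2\hookrightarrow C^0$ via Arzel\`a--Ascoli here, because $K$ is already assumed compact (not merely bounded), so the continuity of $\iota$ alone gives compactness of $\iota(K)$; the $C^2$ topology only enters through the continuity of the embedding, not its compactness.
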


Now, we have established the existence of an arbitrarily close approximate predictor, but we did not discuss the detail on the number of parameters or data required to obtain this bound. Such analysis is beyond the goal of this paper, but we refer the reader to \cite{mukherjee2024size}, \cite{NLM2024data}, \cite{LST2024discretization} for further details. Further, given the approximation is arbitrarily close in $\epsilon$, this perturbation will affect the stability of the system compared to the global asymptotic result conducted with the exact predictor (which can never be implemented in practice). Therefore, in the next section, we identify the exact effect of the approximate predictor on the overall feedback loop. 
\section{Stability analysis under approximate predictors} \label{sec:stability}
We are now ready to analyze the plant \eqref{eq:plant-ode-pde-1}, \eqref{eq:plant-ode-pde-2}, \eqref{eq:plant-ode-pde-3} under the controller with the neural operator approximated predictor 
\begin{align} \label{eq:approximate-predictor-feedback-law}
    U(t) =& \kappa\left(\hat{P}\left(X(t), T_{D}(t)U, \breve{D}(t)\right)\right)\,,
\end{align}
where $T_{D}(t)$ is the historical control operator which applied to a function $U$ yields
\begin{align} 
\label{eq:shift-operator-control-history}
    (T_{\breve{D}}(t)U) (x) := U(t-D(t)(x -1))\,,  \quad  x \in [0, 1)\,.
\end{align} 
The operator $T_{D}(t)$ shifts the function $U$ back by $t-D(x-1)$ units such that, we have $u(x, t) = (T_{D}(t)U)(x)$ and therefore the resulting output of the exact predictor operator satisfies \eqref{eq:exact-predictor-adaptive}. Notice that, in this case, we consider the full actuator measurement such that $u(x, t)$ is always known which is not necessarily guaranteed for every application. However, as shown in Section \ref{sec:numerical}, this is a reasonable assumption for biological systems of which we study in this work. 

We are now ready to present the main result of the paper. 
\begin{theorem} \label{thm:main-result}
    Let the system \eqref{eq:plant-ode-pde-1}, \eqref{eq:plant-ode-pde-2}, \eqref{eq:plant-ode-pde-3} satisfy Assumptions \ref{assumption:strongly-forward-complete}, \ref{assumption:gas}, \ref{assumption:lipschitz-dynamics}, \ref{assumption:growth-condition}. Define the functional
    \begin{align}
        \Gamma(t) = |X(t) |^2 + \int_{t-D}^t U(\theta)^2d \theta + |\tilde{D}(t)|^2\,,  
    \end{align}
    where $\tilde{D}(t) = D-\breve{D}(t)$. Then, there exists constants $\gamma^\ast$, $b^\ast(\bar{D}, C_f)$, $\overline{\Gamma}(\overline{X}, \overline{U}, \overline{D}) > 0$, $\beta_1^\ast \in \mathcal{KL}$ and class $\mathcal{K}_\infty$ functions $\alpha_1^\ast, \alpha_2^\ast, \alpha_3^\ast, \alpha_4^\ast, \alpha_5^\ast$ such that if $\gamma < \gamma^\ast$, $b > b^\ast$, $\epsilon < \epsilon^\ast$ where
    \begin{align} \label{eq:epsilon-star}
        \epsilon^\ast = (\alpha_1^\ast)^{-1}(\overline{\Gamma
     }-\alpha_2^\ast(\overline{\Delta D}))\,, 
    \end{align}
    where $\overline{\Delta D} = \overline{D}- \underline{D}$ and the initial state is constrained to
    \begin{align} \label{eq:gamma-zero}
    \Gamma(0) \leq \alpha_3^\ast(\overline{\Gamma} - \alpha_1^\ast(\epsilon) - \alpha_2^\ast(\overline{\Delta D}))\,, 
    \end{align}
    then, 
    \begin{align} \label{eq:gamma-t}
        \Gamma(t) \leq \alpha_3^\ast(\Gamma(0)) + \alpha_1^\ast(\epsilon) + \alpha_2^\ast(\overline{\Delta D})\,, 
    \end{align}
    and 
    \begin{align} \label{eq:regulation}
       |X(t)|^2 &\leq \beta_1^\ast(|X(0)|^2, t) + \alpha_4^\ast(\overline{\Delta D}) + \alpha_5^\ast (\epsilon)\,,  \\ 
        \|u(t)\|_{L^2[0, D]}^2 &\leq \beta_1^\ast(\|u(0)\|_{L^2[-D, 0]}^2, t) + \alpha_4^\ast(\overline{\Delta D}) + \alpha_5^\ast(\epsilon)\,.   
    \end{align}
\end{theorem}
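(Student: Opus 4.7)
The plan is to follow the Lyapunov-Krasovskii methodology of \cite{delphine} for the unknown-delay exact-predictor case and track how the neural operator approximation perturbs the analysis. I would begin by working in backstepping coordinates $w(x,t) = u(x,t) - \kappa(P(x,t))$, where $P$ is the \emph{exact} predictor associated with the current delay estimate $\breve{D}(t)$, i.e.\ the solution of \eqref{eq:exact-predictor-adaptive}. The crucial difference from \cite{delphine} is at the actuator boundary: rather than $w(1,t)=0$, the controller \eqref{eq:approximate-predictor-feedback-law} yields
\[
w(1,t) = \kappa\!\left(\hat{\mathcal{P}}(X(t),T_{D}(t)U,\breve{D}(t))(1)\right) - \kappa\!\left(\mathcal{P}(X(t),T_{D}(t)U,\breve{D}(t))(1)\right),
\]
which by Assumption \ref{assumption:growth-condition} and Theorem \ref{thm:uat-predictor} satisfies $|w(1,t)|\le M_4\,\epsilon$ as long as the trajectory stays in the compact set on which the uniform approximation holds.

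Next, I would adopt the composite Lyapunov-Krasovskii functional
\[
L(t) = V(X(t)) + b\int_0^1 (1+x)\,w(x,t)^2\,dx + \frac{1}{\gamma}\tilde{D}(t)^2,
\]
with $V$ from \eqref{eq:linear-lyapunov-1}--\eqref{eq:linear-lyapunov-3} and the projector-based update law \eqref{eq:update-law-projector}--\eqref{eq:update-law}. Differentiating $L$ along the closed-loop trajectories and using integration by parts in the $w\,w_x$ integral reproduces the strictly negative combination of $V(X)$ and $\|w\|_{L^2}^2$ obtained in the exact-predictor analysis of \cite{delphine}, with the $\tilde{D}\dot{\breve{D}}$ cross term designed to cancel the $\phi(t)$ piece of the $\dot L$ calculation (up to the $\mathrm{Proj}$ modification, which only helps). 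The two new effects, relative to \cite{delphine}, are (i) the boundary residual $2b\,(1+1)\,w(1,t)^2 \le 4bM_4^{\,2}\epsilon^2$ arising from integration by parts, and (ii) $\tilde D$-dependent coefficients that are already bounded by $\overline{\Delta D} = \overline{D}-\underline{D}$. After applying Young's inequality to separate the sign-indefinite terms and choosing $\gamma$ small and $b$ large, I obtain a differential inequality of the form $\dot L \le -\mu L + c_1(\overline{\Delta D})^2 + c_2\epsilon^2$ on the relevant sublevel set.

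The main obstacle is the chicken-and-egg problem intrinsic to semi-global practical stability with a learned component: Theorem \ref{thm:uat-predictor} only gives the $\epsilon$-bound on a compact set $K$, but the trajectory must stay in $K$ for the bound to apply. I would resolve this by a bootstrap: pick $\overline{X},\overline{U}$ so the sublevel set $\{\Gamma\le\overline{\Gamma}\}$ is contained in $K$; restrict the initial data through \eqref{eq:gamma-zero}; and apply the comparison lemma to obtain $L(t)\le L(0)e^{-\mu t} + (c_1(\overline{\Delta D})^2+c_2\epsilon^2)/\mu$. The threshold \eqref{eq:epsilon-star} on $\epsilon$ ensures the steady-state ball stays strictly below $\overline{\Gamma}$, so the bootstrap closes and Theorem \ref{thm:uat-predictor} remains applicable throughout $t\ge 0$.

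Finally, to pass from $L$ to $\Gamma$, I would exploit the equivalence between $\|u(t)\|_{L^2}$ and $\|w(t)\|_{L^2}$ via the inverse backstepping map $u(x,t)=w(x,t)+\kappa(P(x,t))$, together with the uniform predictor bound derived inside the proof of Lemma \ref{lemma:continuity-of-predictor} and \eqref{eq:linear-lyapunov-2}. These comparisons yield $\mathcal{K}_\infty$ functions $\alpha_1^\ast,\alpha_2^\ast,\alpha_3^\ast$ that are quadratic in their arguments, while the exponential factor $e^{-\mu t}$ in the comparison bound gives the $\mathcal{KL}$ function $\beta_1^\ast$, delivering \eqref{eq:gamma-t} and the regulation estimates \eqref{eq:regulation}. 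The specific forms of $\gamma^\ast$, $b^\ast(\overline{D},C_f)$ and $\overline{\Gamma}(\overline{X},\overline{U},\overline{D})$ emerge directly from the Young's-inequality coefficients chosen when signing $\dot L$.
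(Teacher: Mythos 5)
Your overall architecture is the right one: the exact-predictor backstepping transform, the nonzero boundary residual $w(1,t)=\kappa(\hat P(1,t))-\kappa(P(1,t))$ bounded by $M_4\epsilon$ on a compact set, the bootstrap to keep the trajectory inside the set where Theorem~\ref{thm:uat-predictor} applies, and the final conversion from the backstepping variable to $\Gamma$ via the invertible transformation. But there is a genuine gap in the choice of Lyapunov--Krasovskii functional, and it is not a cosmetic one.

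You use the \emph{quadratic} functional $L(t)=V(X)+b\int_0^1(1+x)w^2\,dx+\tfrac{1}{\gamma}\tilde D^2$ together with the \emph{normalized} update law \eqref{eq:update-law-projector}--\eqref{eq:update-law}, in which $\phi(t)$ carries the normalizing factor $N(t)^{-1}=\bigl(1+V+b\int_0^1(1+x)w^2\bigr)^{-1}$. Differentiating $L$, the $\tilde D$-cross term generated by $w_t$ in \eqref{eq:plant-target-ode-pde-2} is $-\tfrac{2b}{D}\tilde D\int_0^1(1+x)w\,q_1\,dx$ (no $N(t)$ factor), whereas $\tfrac{d}{dt}\bigl(\tfrac{1}{\gamma}\tilde D^2\bigr)=-2\tilde D\,\mathrm{Proj}\{\breve D,\phi\}$ does carry $N(t)^{-1}$. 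These two terms do not cancel: one is unnormalized and the other is normalized, and they also differ by a factor of $b$. The cancellation you invoke (``the $\tilde D\dot{\breve D}$ cross term designed to cancel the $\phi(t)$ piece'') simply does not occur with your $L$. This is exactly why the paper instead takes $W(t)=D\log N(t)+\tfrac{b}{\gamma}\tilde D^2$: differentiating $D\log N$ produces $D\dot N/N$, so \emph{all} the $w$-terms inherit the $N^{-1}$ factor, and the normalized $\phi$-contribution lines up and cancels. The same normalization is what makes $|\dot{\breve D}|\le\gamma M_7$ uniformly (the paper's cited Lemma~3 of \cite{delphine}), which is then needed to absorb the $D\dot{\breve D}q_2$ term via Young's inequality; without it that term is state-dependent and cannot be swallowed by the negative-definite part.

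A secondary consequence is that your claimed inequality $\dot L\le-\mu L+c_1(\overline{\Delta D})^2+c_2\epsilon^2$ with subsequent exponential decay is stronger than what can be obtained. With the normalized scheme the negative term in $\dot W$ is scaled by $N(t)^{-1}$, so decay slows as the state grows; the paper therefore gets a generic $\mathcal{KL}$ estimate through \cite[Theorem C.3]{kkk} (after a reverse Young's inequality step to decouple $\log N$ from $\tilde D^2$), not an exponential one. To close your argument you would need to (a) replace $L$ by the logarithmic $W$ so the cancellation with the normalized update law actually holds, (b) use reverse Young to separate the $\tilde D^2$ contribution and produce the $\overline{\Delta D}$ offset, and (c) replace the comparison-lemma/exponential-decay step by the $\mathcal{KL}$ estimate of \cite[Theorem C.3]{kkk}. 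With those repairs the remainder of your route — compactness bootstrap and $L\!\leftrightarrow\!\Gamma$ equivalence — matches the paper.
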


Notice that Theorem \ref{thm:main-result} is weaker than the result in \cite[Theorem 7]{bhan2023neural} because it depends on the delay projection bounds $\overline{D}$, $\underline{D}$. This is expected since the error introduced by the predictor is additive, leading to complications similar to those in robust adaptive control, where regulation depends on projector bounds (See \cite{IKHOUANE1998429}).

Additionally, it may seem counterintuitive that $\epsilon^\ast$ increases with $\overline{\Gamma}$, but this is expected: a larger radius of states allows for a larger transient, thus loosening the $\epsilon^\ast$ bound. 

Finally, we briefly compare Theorem 3 with other adaptive control implementations using neural operators in PDEs. In \cite[Theorem 4]{bhan2024adaptivecontrolreactiondiffusionpdes} and \cite[Theorem 4]{lamarque2024adaptiveneuraloperatorbacksteppingcontrol}, the authors approximate the gain kernel (not the controller), which introduces a multiplicative error and leads to global stability. However, this requires approximating the time derivative of the update parameter. In contrast, in Theorem \ref{thm:main-result}, our approximation of $P$ is similar to approximating the direct control law, as in \cite[Theorem 3]{bhan2023neural} resulting in an additive error and yielding a semiglobal practical result, without the need to approximate the time derivative of $P_t$.

\begin{proof}
    We first introduce a backstepping transform, that under the feedback law \eqref{eq:approximate-predictor-feedback-law}, transforms the system into a target system that we then use for our stability analysis. 
    \begin{lemma} \label{lemma:backstepping-lemma}
    Under the backstepping transformation with the \emph{exact predictor} given by 
    \begin{align} \label{eq:lemma2-bcks-transform}
        w(x, t) = u(x, t)- \kappa(P(x, t))\,,
    \end{align}
    the system \eqref{eq:plant-ode-pde-1}, \eqref{eq:plant-ode-pde-2}, \eqref{eq:plant-ode-pde-3} with control law given by \eqref{eq:approximate-predictor-feedback-law}
    \begin{subequations} is transformed into    
    \begin{align}
        \dot{X}(t) &= f(X(t), \kappa(X(t)) + w(0, t))\,, \label{eq:plant-target-ode-pde-1} \\ 
        Dw_t &= w_x - \tilde{D}(t) q_1(x, t) - D \dot{\breve{D}}(t) q_2(x, t)\,,\label{eq:plant-target-ode-pde-2}  \\
        w(1, t) &= \kappa(P(x, t)) - \kappa(\hat{p}(x, t))\,, \label{eq:plant-target-ode-pde-3} 
    \end{align}
    \end{subequations}
    where $\tilde{D} = D-\breve{D}(t)$, $\hat{p}(x, t) = P_1(x)$, and $q_2$ is given by
    \begin{align} \label{eq:q2}
        q_2 =& \frac{\partial \kappa}{\partial P}(P(x, t)) \int_0^x \Phi(x, y, t) f(P(y, t), \kappa(P(y, t)) \nonumber \\ &+w(y, t)) dy\,. 
    \end{align}
    \end{lemma}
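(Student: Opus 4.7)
The plan is to carry out a direct backstepping computation, following the standard delay-adaptive predictor feedback derivation of \cite{delphine}, but keeping careful track of the fact that the control law is built from the neural operator approximation $\hat{P}$ rather than the exact predictor $P$. First I would differentiate the transformation $w = u - \kappa(P)$ in $x$ and in $t$. Differentiating the implicit defining relation for $P$ in \eqref{eq:exact-predictor-adaptive} with respect to $x$ gives the pointwise identity $P_x(x,t) = \breve{D}(t) f(P(x,t), u(x,t))$, and the chain rule then yields
\begin{align}
w_x(x,t) &= u_x(x,t) - \frac{d\kappa}{dP}(P(x,t)) \breve{D}(t) f(P,u)\,, \nonumber \\
D w_t(x,t) &= D u_t(x,t) - D \frac{d\kappa}{dP}(P(x,t)) P_t(x,t)\,.
\end{align}
Using the transport equation $D u_t = u_x$ from \eqref{eq:plant-ode-pde-2}, the two expressions combine to give
\begin{equation}
D w_t - w_x = -\frac{d\kappa}{dP}(P) \bigl[ D P_t - \breve{D}(t) f(P,u) \bigr]\,.
\end{equation}

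Next I would analyze the bracketed quantity $\eta(x,t) := D P_t(x,t) - \breve{D}(t) f(P(x,t), u(x,t))$ as a linear inhomogeneous ODE in $x$, with $t$ entering only as a parameter. Commuting the space and time derivatives in $D P_t$ through the identity $P_x = \breve{D}(t) f(P,u)$ and again eliminating $u_t$ via $D u_t = u_x$, the terms $\breve{D}\partial_u f \cdot u_x$ cancel and what remains is exactly
\begin{equation}
\eta_x = \breve{D}(t)\, \frac{\partial f}{\partial P}(P(x,t), u(x,t))\, \eta + D \dot{\breve{D}}(t)\, f(P(x,t), u(x,t))\,.
\end{equation}
The initial condition at $x=0$ follows from $P(0,t) = X(t)$ and the plant equation $\dot{X}(t) = f(X(t), u(0,t))$, which give $\eta(0,t) = (D - \breve{D}(t)) f(P(0,t), u(0,t)) = \tilde{D}(t) f(P(0,t), u(0,t))$. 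Solving this linear ODE by variation of parameters with the transition matrix $\Phi$ associated to $(dr/dx)(x) = \breve{D}(t) (\partial f/\partial P)(P,u) r(x)$ produces the split
\begin{align}
\eta(x,t) &= \tilde{D}(t)\, \Phi(x,0,t)\, f(P(0,t), u(0,t)) \nonumber \\
&\quad + D \dot{\breve{D}}(t) \int_0^x \Phi(x,y,t)\, f(P(y,t), u(y,t))\, dy\,.
\end{align}
Substituting this back into the expression for $D w_t - w_x$ and identifying the two terms with $\tilde{D}(t) q_1(x,t)$ and $D \dot{\breve{D}}(t) q_2(x,t)$ (using $u(y,t) = \kappa(P(y,t)) + w(y,t)$ in the second integral to match the stated form of $q_2$) yields \eqref{eq:plant-target-ode-pde-2}.

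Finally, the ODE \eqref{eq:plant-target-ode-pde-1} follows immediately from $P(0,t) = X(t)$ and $u(0,t) = w(0,t) + \kappa(P(0,t)) = w(0,t) + \kappa(X(t))$, plugged into \eqref{eq:plant-ode-pde-1}. The boundary condition \eqref{eq:plant-target-ode-pde-3} comes from evaluating $w(1,t) = u(1,t) - \kappa(P(1,t)) = U(t) - \kappa(P(1,t))$ and using the approximate feedback law \eqref{eq:approximate-predictor-feedback-law} to replace $U(t)$ by $\kappa(\hat{P}(1,t))$. The main obstacle in the proof is the computation leading to the ODE for $\eta$: one must commute $\partial_x$ and $\partial_t$ on $P$ and carefully exploit both $P_x = \breve{D} f$ and $D u_t = u_x$ so that the $\partial_u f \cdot u_x$ contributions telescope, leaving exactly the transition-matrix structure; every other step is a straightforward chain-rule or boundary manipulation.
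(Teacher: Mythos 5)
Your derivation is correct and follows the same backstepping route as the paper, which simply defers to \cite[Lemma 1]{delphine} for the PDE computation and only notes the modified boundary condition; you have filled in the full chain-rule, commutation, and variation-of-parameters argument for $\eta = D P_t - \breve{D} f(P,u)$ that \cite{delphine} carries out. One small note: your (correct) boundary computation gives $w(1,t) = \kappa(\hat{P}(1,t)) - \kappa(P(1,t))$, whereas \eqref{eq:plant-target-ode-pde-3} as printed has the two terms reversed; this appears to be a sign typo in the paper (it is harmless downstream since the term only enters squared in the Lyapunov estimate).
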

    Since we used the exact backstepping transform in \eqref{eq:lemma2-bcks-transform}, the proof of Lemma \ref{lemma:backstepping-lemma} is exactly the same as the proof of \cite[Lemma 1]{delphine} except the boundary term at $x=1$ which is obtained via direct substitution of $U(t) =\kappa(\hat{P}(\cdot)) - \kappa(P(\cdot))$ in \eqref{eq:lemma2-bcks-transform}.  

    To analyze \eqref{eq:plant-target-ode-pde-1}, \eqref{eq:plant-target-ode-pde-2}, \eqref{eq:plant-target-ode-pde-3}, we introduce the following Lyapunov-Krasovskii functional
    \begin{align}
        W(t) &= D\log N(t) + \frac{b}{\gamma}\tilde{D}(t)^2  \\ 
        N(t) &= 1+V(x) + b\int_0^1(1+x)w(x, t)^2 dx\,.
    \end{align}
    Taking the time derivative and substituting $w_t$ yields
    \begin{align}
        \dot{W}(t) =& \frac{1}{N(t)} \bigg( D \frac{\partial V(X)}{\partial X} \bigg[f(X(t), \kappa(X(t)) + w(0, t)) \nonumber  \bigg] \bigg) \nonumber \\ & + \frac{2b}{N(t)} \int_0^1 (1+x) w(x, t)\bigg(w_x(x, t) - \tilde{D}(t) q_1(x, t) \nonumber \\ &- D \dot{\breve{D}}(t) q_2(x, t) \bigg) dx - \frac{2b}{\gamma} \tilde{D} \dot{\breve{D}}(t) \,.
    \end{align}
    Using the Lipschitzness of $f$ in Assumption \ref{assumption:lipschitz-dynamics} and the Lyapunov condition in Assumption \ref{assumption:gas}, we have
    \begin{align}
        \dot{W}(t) &\leq  \frac{1}{N(t)} \left(-D \lambda |X(t)|^2 + DC_2 C_f |X(t)| |w(0, t)| \right) \nonumber \\  & + \frac{2b}{N(t)} \int_0^1 (1+x) w(x, t)\bigg(w_x(x, t) - \tilde{D}(t) q_1(x, t) \nonumber \\ &- D \dot{\breve{D}}(t) q_2(x, t) \bigg) dx - \frac{2b}{\gamma} \tilde{D} \dot{\breve{D}}\,.
    \end{align}
    Distributing the second term, applying integration by parts, and substituting \eqref{eq:update-law} yields
    \begin{align}
        \dot{W}(t)  \leq& \nonumber  \frac{1}{N(t)} \bigg( -D \lambda |X(t)|^2 + DC_2C_f |X(t)||w(0, t)| \nonumber \\  &+ 2bw^2(1, t) - 2bw^2(0, t)  - 2b\|w(t)\|_2^2 \nonumber \\ &+ 2bD \dot{\breve{D}}(t) \int_0^1 (1+x)w(x, t) q_2(x, t) dx  \bigg) \nonumber \\ &+ \frac{2b}{\gamma}\tilde{D}(t)(\gamma \phi(t) - \dot{\breve{D}}(t))\,.
    \end{align}
    Using \cite[Lemma 2 and 3]{delphine} which states that there exists $M_5, M_6, M_7 > 0$ such that 
    \begin{align}
    |P(x, t)| &\leq M_5(|X| + \|w(t)\|_2)\,, \label{eq:breve-preidctor-bound} \\ 
    2bD \left|\int_0^1 (1+x)q_2(x, t) w(x, t) dx \right| &\leq M_6\left(|X|^2 + \|w(t)\|_2^2 \right)\,, \\
    \left| \dot{\breve{D}}(t) \right| &\leq \gamma M_7\,, 
\end{align}
in conjunction with Young's inequality yields
\begin{align}
    \dot{W}(t) \leq & \nonumber \frac{1}{N(t)} \bigg(-\eta \left(|X(t)|^2 + \|w(t)\|_2^2 \right) \\ \nonumber  &- \bigg(2b- \frac{C_2^2 D C_f^2}{2 \lambda}\bigg) w^2(0, t) \\ &+ \gamma M_6 M_7 \bigg(|X(t)|^2 + \|w(t)\|_2^2 \bigg)  \nonumber  \\ & + 2b w^2(1, t)\bigg)\,,
\end{align}
where $\eta := \min(D\lambda/2, 2b)$. Thus, given that 
\begin{align}
    b &> \frac{C_2^2 \overline{D} C_f^2 }{4\lambda} =: b^\ast\,,  \\
    \gamma &<  \frac{\eta}{M_6M_7} =: \gamma^*\,, 
\end{align}
guarantees there exists $C>0$ such that
\begin{align} \label{eq:w-dot}
    \dot{W}(t) \leq& \nonumber  -\frac{C}{N(t)}(|X(t)|^2 + \|w(t)\|^2_2) \\ & + \frac{2b}{N(t)}(\kappa(P(x, t)) - \kappa(\hat{p}(x, t)))^2 \,.
\end{align}
Now, using the definition of $W(t)$ and Assumption \ref{assumption:gas}, we have that 
\begin{align}
    e^{W(t)/D - \frac{b}{\gamma{D}}(\tilde{D}(t))^2} - 1 \leq& \max\{C_1, 2b\}(|X(t)|^2 + \|w(t)\|_2^2)\,,
\end{align}
Using reverse Young's inequality (See \ref{appendix:reverse-youngs}), we have
\begin{align}
    e^{W(t)/D}e^{-\frac{b}{\gamma{D}}(\tilde{D}(t))^2} \geq 2e^{W(t)/(2D)}-e^{\frac{b}
    {\gamma D}{\tilde{D}(t))^2}} 
\end{align}
yielding
\begin{align}
 2e^{W(t)/D} - &e^{\frac{b}{\gamma D}(\tilde{D}(t))^2}-1 \nonumber \\ 
 &\leq \max\{C_1, 2b\}(|X(t)|^2 + \|w(t)\|_2^2)\,, 
\end{align}
which then results in 
\begin{align}
   - (|X(t)|^2 + \|w(t)\|_2^2) \leq&  \frac{1}{\max\{C_1, 2b\}} \bigg(-2e^{W(t)/D} \nonumber \\ &+ e^{\frac{b}{\gamma D}(\tilde{D}(t))^2} + 1 \bigg)\label{eq:w-lowerbound}\,. 
\end{align}
Substituting into \eqref{eq:w-dot} yields
\begin{align}
    \dot{W}(t) \leq& -\frac{C}{\max\{C_1, 2b\}N(t)}   \nonumber 2e^{\frac{W(t)}{D}} \\ &+ \bigg[ \frac{C}{\max\{C_1, 2b\}N(t)}\left(1+e^{\frac{b}{D\gamma}(\overline{\Delta D})^2}\right) \nonumber \\  &+ b(\kappa(P(x, t)) - \kappa(\hat{p}(x, t)))^2 \bigg]\,. 
\end{align}
Then, by Assumption \ref{assumption:growth-condition}, we have that for all $x \in [0, 1]$, $t\geq 0$
\begin{align}
    (\kappa(P(x, t)) - \kappa(\hat{p}(x, t)))^2 \leq& M_3^2 |P(x, t) - \hat{p}(x, t)|^2 \nonumber \\ \leq& M_3^2 \epsilon^2\,.
\end{align}
Following 
\cite[Theorem C.3]{kkk}, we have that there exists functions $\beta_1 \in \mathcal{KL}$ and $\alpha_1, \alpha_2, \alpha_3\in \mathcal{K}_\infty$ such that 
\begin{align}
    W(t) \leq& \beta_1(W(0), t) +\nonumber  \alpha_1(\overline{\Delta D})  \\ &+  \alpha_2(\sup_{0 \leq \tau \leq t}(\kappa(P(x, t)) - \kappa(\hat{p}(x, t)))^2 ) \nonumber \\ 
    \leq& \beta_1(W(0), t) + \alpha_1(\overline{\Delta D}) + \alpha_3(\epsilon)\label{eq:w-stability-estimate}\,,
\end{align}
when $\epsilon$ and $\overline{\Delta D}$ is small enough relative to all possible values of $W$ (i.e. $\epsilon + (\overline{\Delta D}) \leq \alpha_4(W(t)) \leq \alpha_5(\overline{X}+\overline{U}+\overline{D})$ where $\alpha_4, \alpha_5 \in \mathcal{K}_\infty$). 
To obtain an estimate on $\Gamma$, note that from \cite[Eqn. (47), (48)]{delphine}, we have that 
\begin{align}
    \Gamma(t) &\leq \left(D\left(r_1 + \frac{r_2}{b} \right) + 1 + \frac{\gamma D}{b} \right) \left(e^{\frac{W(t)}{D}} - 1 \right) \,, \\
    W(t) &\leq D \left(c_1 + 2b \left(s_1 + \frac{s_2}{D} \right) + \frac{b}{\gamma D} \right) \Gamma(t)\,.
\end{align}
Therefore, substituting in the stability estimate in \eqref{eq:w-stability-estimate}, we obtain the exists of class $K^\infty$ functions $\alpha_1^\ast$, $\alpha_2^\ast$, $\alpha_3^\ast$ such that 
\begin{align}
    \Gamma(t) \leq \alpha^\ast_1(\Gamma(0)) + \alpha_2^\ast(\epsilon)+\alpha_3^\ast(\overline{\Delta D}) \,,
\end{align}
when $\Gamma(0) \leq (\alpha^\ast_1)^{-1}(\overline{\Gamma} - \alpha_2^\ast(\epsilon) - \alpha_3^\ast(\overline{\Delta D}))$ where $\overline{\Gamma} = \overline{X}+ \overline{U} + \overline{D}$. 

To show obtain a bound on $X(t)$, note that we have
\begin{align}
    |X(t)|^2 \leq V(t) \leq (e^{\frac{W(t)}{D}}-1)\, \label{eq:x-upperbound}.
\end{align}
Using the stability estimate on $W(t)$ in \eqref{eq:w-stability-estimate}, along with the inequalities \eqref{eq:w-lowerbound}, \eqref{eq:x-upperbound} we have that there exists $\beta_2 \in \mathcal{KL}$, $\alpha_6, \alpha_7 \in \mathcal{K}_\infty$ such that 
\begin{align}
    |X(t)|^2 \leq \beta_2(|X(0)|^2, t) + \alpha_6(\overline{\Delta D}) + \alpha_7 (\epsilon)\,. 
\end{align}

Similarly, to show a stability estimate on $U(t)$, notice that \eqref{eq:breve-preidctor-bound} and Assumption \ref{assumption:growth-condition} implies that there exists constants $r_1, r_2, s_1, s_2 > 0$ such that
\begin{align}
    \|u(t)\|_2^2 \leq& r_1|X(t)|^2 + r_2\|w(t)\|^2\,,  \label{eq:u-xw-bound} \\
    \|w(t)\|_2^2 \leq& s_1|X(t)|^2 + s_2\|u(t)\|^2\,.\label{eq:w-ux-bound}
\end{align}
Thus, using \eqref{eq:u-xw-bound}, \eqref{eq:w-ux-bound}, along with the definition of $W(t)$, we have that 
\begin{align}
    \int_{t-D}^t U(s)^2ds = D\|u(t)\|_2^2 &\leq D(r_1 |X(t)|^2 + r_2\|w(t)\|_2^2)\nonumber  \\
    &\leq D\left(r_1 + \frac{r_2}{b}\right) \left(e^{\frac{W(t)}{D}} - 1\right)\,. \label{eq:u-upper-bound}
\end{align}
Then, repeating the argument above with the stability estimate in \eqref{eq:w-stability-estimate} along with the inequalities \eqref{eq:w-lowerbound}, \eqref{eq:u-upper-bound}, we have there exists $\beta_3 \in\mathcal{KL}$, $\alpha_8, \alpha_9 \in \mathcal{K}_\infty$ such that 
\begin{align}
    \|u(t)\|_{L^2[0, D]}^2 \leq \beta_3(\|u(0)\|_{L^2[-D, 0]^2}, t) + \alpha_8(\overline{\Delta D}) + \alpha_9(\epsilon)\,.   
\end{align}
Letting $\beta_1^\ast(\cdot, \cdot) = \max\{\beta_2(\cdot, \cdot), \beta_3(\cdot, \cdot)\}$, $\alpha_4^\ast(\cdot) = \max\{\alpha_6(\cdot), \alpha_8(\cdot)\}$ and $\alpha_5^\ast(\cdot) = \max\{\alpha_7(\cdot), \alpha_9(\cdot)\}$ completes the result.

\end{proof}
\section{Numerical Experiments} \label{sec:numerical}

\begin{figure*}[!htbp]
    \centering
    \includegraphics[trim=0 0.65cm 0 0, clip]{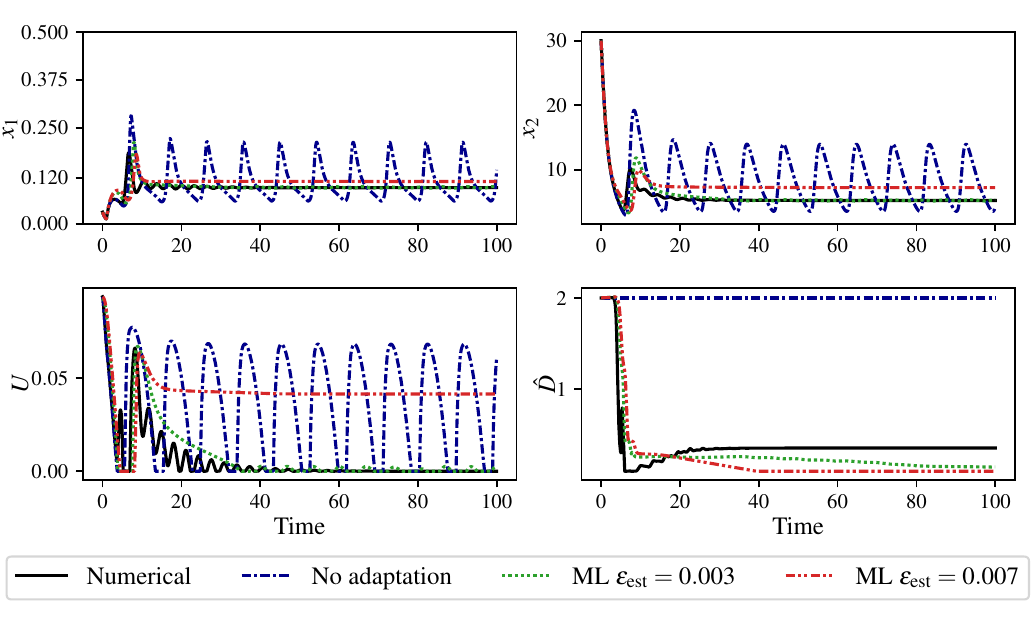}
    \caption{Simulation of the plant \eqref{eq:main-problem} with various approximate predictors. The initial state is $X(0) = [0.03, 30]$, delay is $D=1$, $\hat{D}(0) = 2$, $\gamma=1000$, $b=1$. The black line indicates the numerical predictor, the blue-line is the predictor without delay-adaptation, the red line indicates the DeepONet predictor with higher error and the green line indicates the DeepONet predictor trained to optimality.}
    \label{fig:main-fig}
\end{figure*}

\color{blue}
For the simulation analysis, we provide all code, numerical parameters, and datasets publicly on Github (\url{https://github.com/lukebhan/Neural-Operator-Delay-Adaptive-Predictor-Feedback}).
\color{black}

For numerical validation, we consider the biological system in \cite{4282275} consisting of two proteins. The activator protein that promotes expression of itself and a repressor protein that represses the expression of the activator. Such biological clocks play fundamental roles in cell physiology \cite{Pomerening2003} and control of such systems is of valuable interest to synthetic biologists for the design of new medicines. The system dynamics manifest as 
\begin{align}
    \dot{x}_1 &= -x_1 + f_1(x_1, x_2) + U(t-D)\,, \\
    \dot{x}_2 &= -\frac{x_2}{2} + f_2(x_1)\,,
\end{align}
where $x_1$ is the concentration of the activator protein, $x_2$ is the concentration of the repressor protein and $f_1$, $f_2$ are Hill functions given by 
\begin{align}
    f_1(x_1, x_2) &= \frac{K_1x_1^2 + K_a}{1+x_1^2 + x_2^2}\,, \\
    f_2(x_1) &= \frac{K_2x_1^2 + K_b}{1+x_1^2}\,,
\end{align}
with $K_1=K_2 = 300, K_a=0.04, K_b=0.004$. 
The input $U$ controls the activator protein concentration with a delay due to the time to pass through an inlet pipe. Additionally, we assume the pipe is equipped with a UV spectrometer such that we can measure the presence of the activator protein along the pipe and thus the distributed input $u(\cdot, t)$ is known. The goal is to stabilize the system to the unstable target equilibrium $(x_1^\ast, x_2^\ast)$. 

A control law satisfying Assumption \ref{assumption:gas} is given by
\begin{align}
    \kappa(X(t)) = -f_1(x_1, x_2) + f_1(x^\ast_1, x_2^\ast)\,,
\end{align}
where $(x_1^\ast, x_2^\ast) = (0.0939, 5.2525)$ is an unstable equilibrium setpoint of the system. The associated Lyapunov function is given by $V = (X-X^\ast)^T(X-X^\ast)$ and note the system experiences a limit cycle \cite[Figure 2]{delphine} when an open-loop control law $U(\cdot) = 0$ is used.  

To train a neural operator approximate predictor, we generate a dataset consisting of $5000$ instances where each instance is sampled from simulating the system with a numerical implementation of the predictor \cite{iassonBook} with various initial conditions and initial delay estimations (See code for parameters; Dataset generation took $60$ minutes). We then train two different neural operator architectures (DeepONet,FNO) requiring approximately $10$ minutes on a Nvidia $3090$Ti GPU. Both architectures performed similarly, but the speedup of DeepONet was larger as given in Table \ref{tab:comp-time}.

We present the simulation results of the dynamical system in Figure \ref{fig:main-fig} with the DeepONet predictor. To highlight the effect of $\epsilon$, we consider two approximate predictors where one has the training stopped early and thus contains a large $\epsilon$-error and the other is trained to optimality. In the case of early stopping, the system converges around the equilibrium, but with a larger radius then the  predictor trained to optimality. Furthermore, despite the smaller error, the predictor trained to optimality does not fully converge to the setpoint, but converges to the point $(0.094, 5.39)$ (Recall $(x_1^\ast, x_2^\ast) = (0.0939, 5.2525)$) due to the uniform approximation error $\epsilon$. In both cases, we see that the effect of the training error $\epsilon$ directly impacts the radius of convergence around the target equilibrium setpoint as expected given in Theorem \ref{thm:main-result}. Furthermore, note that the delays do not converge to $D$, but close enough to ensure stability. This is a well-known phenomena of adaptive control where the global goal of the scheme \emph{is not parameter convergence}, but is instead stability \cite{kkk}. 

Lastly, we investigated the speedup of the neural operator approximate predictors. At all discretizations, the ML predictor is faster then the numerical approach with the largest speedup of $15$x under the smallest discretization size ($dx=0.001$). We note that the discretization size must be less then $0.005$ numerical stabilization (Using an Euler approximation scheme). This is less then the $100$x speedup achieved in \cite[Table 1]{bhan2024neuraloperatorspredictorfeedback}, but is attributed to the fact the dynamics $f$ are much simpler to simulate in this example compared to the robotic manipulator in \cite{bhan2024neuraloperatorspredictorfeedback}. If one applied the same methodology with more expensive forward dynamics $f$, one should except to see even larger speedup over the numerical method under a neural operator approximate predictor.  
\begin{table}[]
\centering
\caption{Computation time for various approximate predictors averaged over 1000 samples (seconds). }
\label{tab:comp-time}
\resizebox{\linewidth}{!}{%
\begin{tabular}{l|ccccc}
\begin{tabular}[c]{@{}l@{}}Step \\ size (dx)\end{tabular} & \multicolumn{1}{l}{Numerical $\downarrow$} & \multicolumn{1}{l}{DeepONet $\downarrow$} & \multicolumn{1}{l}{FNO $\downarrow$} & \multicolumn{1}{l}{\begin{tabular}[c]{@{}l@{}}DeepONet\\  Speedup $\uparrow$\end{tabular}} & \multicolumn{1}{l}{\begin{tabular}[c]{@{}l@{}}FNO\\ Speedup $\uparrow$\end{tabular}} \\ \hline
0.01                                                      & 1.601                                      & \textbf{0.496}                            & 1.331                                & \textbf{3.22x}                                                                             & 1.20x                                                                                \\
0.005                                                     & 3.295                                      & \textbf{0.587}                            & 1.440                                & \textbf{5.61x}                                                                             & 2.29x                                                                                \\
0.001                                                     & 18.197                                     & \textbf{1.212}                            & 2.108                                & \textbf{15.01x}                                                                            & 8.63x                                                                               
\end{tabular}%
}
\end{table}

\section{Conclusion}
In this paper, we presented the first result studying neural operator approximate predictors when the delay is \emph{unknown} and therefore estimated online. In particular, we formalized and proved the existence of an arbitrary close neural operator approximation of the predictor operator which was then used in to prove semi-global $\epsilon$-practical asymptotic stability via a Lyapunov-Kasovskii functional. We emphasize that our theoretical result is applicable to any \emph{black-box} approximate predictor satisfying a uniform error bound. Lastly, we concluded by training a neural operator predictor to compensate for a input-lag in a biological protein activator/repressor system achieving speedups on the magnitude of $15$x. In the future, we hope to explore the robustness of neural operator approximate predictors for more challenging delay problems such as state-dependent and unknown noisy delays.  
\bibliography{references}
\bibliographystyle{IEEEtranS}

\renewcommand{\thesection}{Appendix \Alph{section}}
\setcounter{section}{0}
\section{Reverse Young's Inequality}\label{appendix:reverse-youngs}

\begin{lemma}
    Given $a, b \geq 0$ and $p, q$ such that 
    \begin{align}
        \frac{1}{p} - \frac{1}{q} = 1\,, 
    \end{align}
    then
    \begin{align}
        \frac{a^p}{p} - \frac{b^{-q}}{q} \leq ab\,. 
    \end{align}
\end{lemma}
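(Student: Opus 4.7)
The plan is to deduce the inequality from the classical Young's inequality via a carefully chosen algebraic substitution. The constraint $1/p - 1/q = 1$, together with the (implicit) positivity of the quantities involved, forces $p \in (0,1)$ and $q > 0$, so $p' := 1/p > 1$ is a legitimate exponent for standard Young's inequality. Its Hölder conjugate $q'$ satisfies $1/p' + 1/q' = 1$, i.e.\ $q' = 1/(1-p)$; a short calculation from $1/q = 1/p - 1 = (1-p)/p$ shows $q = p/(1-p)$ and hence $q' = q/p$, which is the bridge connecting the hypothesized exponents $(p,q)$ to a genuine Hölder-conjugate pair $(p',q')$.

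First I would dispose of the degenerate case $a=0$ by noting that the left-hand side becomes $-b^{-q}/q \le 0 = ab$ since $b^{-q}/q > 0$, and assume $b>0$ throughout the main argument (the expression $b^{-q}$ is otherwise ill-defined). Next I would apply the standard Young's inequality
\[
a' b' \;\leq\; \frac{(a')^{p'}}{p'} + \frac{(b')^{q'}}{q'}
\]
to the specific choice
\[
a' := a^{p} b^{p}, \qquad b' := b^{-p}.
\]
Under this choice the three relevant quantities collapse cleanly: $a'b' = a^p$, $(a')^{p'} = (a^p b^p)^{1/p} = ab$, and $(b')^{q'} = b^{-p/(1-p)} = b^{-q}$. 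Substituting yields
\[
a^{p} \;\leq\; p\,(ab) + (1-p)\,b^{-q}.
\]
Dividing through by $p>0$ and using $1/q = (1-p)/p$ rearranges this into the claimed inequality
\[
\frac{a^{p}}{p} - \frac{b^{-q}}{q} \;\leq\; ab.
\]

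The only genuinely non-mechanical step is step two, namely guessing the substitution $(a',b') = (a^p b^p,\, b^{-p})$. One reverse-engineers it by demanding $(a')^{p'} = ab$ and $(b')^{q'} = b^{-q}$; these two requirements uniquely determine $a'$ and $b'$ once $p'=1/p$ and $q'=q/p$ are fixed, and the resulting product $a'b'$ then automatically equals $a^p$, which is exactly what is needed to recover the $a^p/p$ term after dividing by $p$. I do not foresee any serious obstacle beyond this bookkeeping; once the exponent relations $p' = 1/p$ and $q' = q/p$ are recorded, the remainder is routine algebra, and no analytic tools beyond the standard Young's inequality are required.
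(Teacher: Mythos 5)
Your proof is correct: the substitution $(a',b')=(a^pb^p,\,b^{-p})$ into the forward Young's inequality with conjugate exponents $p'=1/p$, $q'=q/p$ checks out exactly as you compute, and your handling of the degenerate case $a=0$ and the implicit requirements $p\in(0,1)$, $q>0$, $b>0$ is appropriate. The paper itself gives no argument here — it only cites an external reference — and your derivation is the standard reduction of the reverse inequality to the forward one, so it serves as a valid self-contained proof consistent with the cited source.
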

\begin{proof}
    See \cite{proofwiki_reverse_young}. 
\end{proof}

\end{document}